\documentclass[12pt]{article}

\usepackage[colorlinks=false]{hyperref}
\usepackage[dvips]{graphicx}
\usepackage[usenames,dvipsnames]{color}
\usepackage{epsfig}
\usepackage{rotating}
\usepackage{amssymb}
\usepackage{amsmath,amsfonts}
\usepackage[latin1]{inputenc}
\usepackage{verbatim}
\usepackage[tableposition=top,font=small,labelfont=bf,format=hang]{caption}
\usepackage{booktabs}
\usepackage{multirow}
\usepackage{geometry}
\usepackage{makecell}
\usepackage{txfonts}
\usepackage{threeparttable}
\usepackage{graphicx}
\usepackage{pgfplots}
\pgfplotsset{compat=1.18}

\newtheorem{lem}{Lemma}
\newtheorem{thm}{Theorem}
\newtheorem{prop}{Proposition}

\newtheorem{rem}{Remark}
\newtheorem{coro}{Corollary}
\newtheorem{ex}{Example}

\newcommand{\F}{\mathbb{F}}

\newenvironment{proof}{\noindent\textbf{Proof.}\quad}
{\hspace{\stretch{1}}%
\rule{1ex}{1ex}\\}

\DeclareMathOperator{\wt}{wt}

\begin{document}
\title{Double circulant LCD codes over $\mathbb{F}_2+u\mathbb{F}_2$ and constructions for binary LCD codes}
\author{Xingxing Xu, Minjia Shi\thanks{Xingxing Xu and Minjia Shi are with the Key Laboratory of Intelligent Computing Signal Processing, Ministry of Education, School of Mathematical Sciences, Anhui
		University, Hefei 230601, China. Email: xuxingxing03@163.com, smjwcl.good@163.com. This research (M. Shi) is supported by National Natural Science Foundation of China (12471490).}, Jon-Lark Kim\thanks{Jon-Lark Kim is with Department of Mathematics, and Institute for Mathematical and Data Sciences, Sogang University, Seoul, South Korea. Email: jlkim@sogang.ac.kr. This research (J.-L. Kim) was supported in part by the Brain Korea 21 FOUR (Fostering Outstanding Universities for Research) funded by the Ministry of
Education(MOE, Korea) and National Research Foundation of Korea(NRF)
under Grant No. 4120240415042 and by Basic Science Research Program through the National Research Foundation of Korea(NRF) funded by the Ministry of Science and ICT under Grant No. RS-2025-24534992.}
}
\date{}
\maketitle
\begin{abstract}
	Linear complementary dual (LCD) codes are linear codes that intersect
	their dual trivially. We study double circulant (Euclidean) LCD codes over
	the finite chain ring $R=\mathbb F_2+u\mathbb F_2$, where $u^2=0$.
	Using the Chinese Remainder Theorem and the factorization of $x^n-1$,
	we derive an explicit enumeration formula, with a separate treatment of
	the constituent corresponding to $x-1$, which is not
	obtained by a formal specialization of the previously known counting
	formula over $\mathbb F_q[u]/(u^k)$. Assuming Artin's primitive root
	conjecture for $2$, we show that the binary Gray images contain an
	asymptotically good family of rate $1/2$ whose relative distance can
	approach $H_2^{-1}(1/4)$ from below. More importantly from the
	constructive viewpoint, by combining Gray images with a known extension
	construction, we obtain a binary LCD code with parameters $[29,15,7]$
	and construct two explicit permutation-inequivalent binary LCD codes
	with parameters $[45,23,9]$.
\end{abstract}
\noindent
{\bf Keywords:}   Double circulant codes, LCD codes, linear codes, Artin's primitive root conjecture \\
{\bf MSC (2020):}   94B25

\section{Introduction}

Linear complementary dual (LCD) codes introduced by Massey
\cite{Massey} are linear codes having trivial intersection with their
dual. The determination of optimal binary LCD codes, in particular the
improvement of their minimum distances for prescribed length and
dimension, remains an active problem. Bounds and constructions for
binary LCD codes have been studied in
\cite{B,GJ,KO,LM,W}, and several recent works have
obtained LCD codes from structured families such as double
circulant, quasi-cyclic, Toeplitz, and formally self-dual codes
\cite{G,H,WL,Z}. From the asymptotic point of
view, Sendrier \cite{Sendrier} showed that unrestricted binary LCD
codes meet the Gilbert--Varshamov bound. These developments motivate
the search for further algebraical constructions producing
binary LCD codes with improved parameters.

A useful approach to this problem is to investigate LCD codes over a
small finite ring and then to obtain binary LCD codes through a Gray map.
In this paper we consider
$R=\mathbb F_2+u\mathbb F_2$, $u^2=0$.
The standard Gray map
$\phi:R^n\rightarrow\mathbb F_2^{2n}$ is an
$\mathbb F_2$-linear bijection and preserves Euclidean duality in the
sense needed here \cite{WS}. Consequently, the Gray image of a
Euclidean LCD code over $R$ is again a binary LCD code. This makes
LCD codes over $R$ a natural source for explicit binary constructions.
Cyclic and self-dual codes over this ring have been
studied in \cite{BU,DGHMS,K}.

We focus on double circulant LCD codes over $R$. Their generator matrices make them suitable for explicit searches, while
their algebraic structure can be described through the Chinese Remainder Theorem (CRT). The relevant theories of quasi-cyclic and quasi-twisted codes over finite fields and
finite chain rings were developed in
\cite{J,L,LS}, and related enumeration and asymptotic
arguments for double circulant or index-two codes have been considered
over finite fields and several finite rings
\cite{AO,A,KFO,QSS,SHSS,SZQSS}.
Thus much of the general CRT framework is already available.
However, the characteristic-two case
$R=\mathbb F_2+u\mathbb F_2$ requires a separate treatment of the
constituent corresponding to $x-1$. In the counting formula of
\cite{QSS}, the displayed contribution of this constituent contains
the factor $(q-2)q^{k-1}$, which cannot simply be specialized to
$q=2$. The corresponding constituent
$\langle(1,c)\rangle$ over $R$ is Euclidean LCD precisely for
$c=0$ and $c=u$. We therefore obtain
an explicit enumeration formula for Euclidean LCD double circulant
codes over $R$. Furthermore, assuming Artin's primitive root conjecture
for $2$, we show that the binary Gray images contain an asymptotically
good family of rate $1/2$ whose relative distance can approach
$H_2^{-1}(1/4)$ from below. Although the enumeration and
asymptotic arguments follow the general framework developed in earlier
works, they provide the structural basis for the binary constructions
developed
below. 

The main constructive purpose of this paper is to use these double
circulant LCD codes over $R$ to obtain binary LCD codes. Taking the Gray
images of double circulant codes over $R$ and applying the extension construction of
\cite{LM} to obtain further binary LCD codes. In particular, we
construct a binary LCD code with parameters $[29,15,7]$, improving the
previously reported minimum distance for binary LCD $[29,15]$ codes.
We also give an explicit binary LCD $[45,23,9]$ code obtained from this construction. In addition, an independent quasi-cyclic
search yields another binary LCD $[45,23,9]$ code. The two
$[45,23,9]$ codes have different weight distributions and hence are
permutation-inequivalent. These explicit constructions constitute
the main computational contribution of the paper.

The paper is organized as follows. Section~\ref{Prel} recalls the basic notions
on LCD codes, the Gray map, and the factorization results needed later.
Section~\ref{DC} gives the CRT decomposition and the enumeration of double
circulant LCD codes over $R$. Section~\ref{Asym} establishes the conditional
asymptotic result. Section~\ref{Gray} studies the Gray images and gives explicit
binary LCD examples. Section~\ref{New} applies a known extension construction
to obtain further binary LCD codes. Section~\ref{Conlu} concludes the paper.

\section{Preliminaries}\label{Prel}
Throughout the paper,
$$
R=\mathbb F_2+u\mathbb F_2=\{0,1,u,1+u\},\, u^2=0.
$$
The ring $R$ is a finite chain ring with maximal ideal $(u)$. For $m\ge1$, we write
$$
\mathcal R_m=\mathbb F_{2^m}+u\mathbb F_{2^m},\,
\mathcal R_m^*=\mathbb F_{2^m}^*+u\mathbb F_{2^m}.
$$
These extension rings occur as CRT constituent rings in Section~\ref{al}.

\subsection{Codes, duality, and the Gray map}
Let $\mathcal A$ be a finite commutative ring. A linear code of length $n$ over $\mathcal A$ is an $\mathcal A$-submodule of $\mathcal A^n$. For $\mathbf x,\mathbf y\in\mathcal A^n$, the \emph{Euclidean inner product} is
$$
\langle\mathbf x,\mathbf y\rangle_E=\sum_{i=1}^n x_i y_i,
$$
and the Euclidean dual is denoted by $C^\perp$. If $\mathcal A$ is equipped with a ring involution $a\mapsto\bar a$, the \emph{Hermitian inner product} is
$$
\langle\mathbf x,\mathbf y\rangle_H=\sum_{i=1}^n x_i\overline{y_i},
$$
and the corresponding dual is denoted by $C^{\perp_H}$. A code is Euclidean LCD (respectively, Hermitian LCD) when its intersection with the corresponding dual is zero. In this paper, the Hermitian duality is used only for the self-reciprocal CRT constituents $\mathcal R_{2d_i}$ and the required conjugation is specified in Section~\ref{al}.

A matrix $A$ over $R$ is \emph{circulant} if each row is obtained by a cyclic shift of the preceding row. A code with generator matrix $(I,A)$, where $I$ and $A$ are $n\times n$ matrices and $A$ is circulant, is called a \emph{double circulant} code of length $2n$ over $R$.

For a family of binary codes with parameters $[N_i,K_i,D_i]$ and $N_i\to\infty$, we use
$$
\rho=\liminf_{i\to\infty}\frac{K_i}{N_i},\,
\delta=\liminf_{i\to\infty}\frac{D_i}{N_i}.
$$
The family is asymptotically good if $\rho>0$ and $\delta>0$.

The standard Gray map over $R$ is
$$
\phi(a+ub)=(b,a+b),\, a,b\in\mathbb F_2,
$$
extended to $R^n$. It is an $\mathbb F_2$-linear bijection \cite{DGHMS}. The following known fact is the Euclidean binary specialization of \cite[Theorem~3.3]{WS} and will be used when moving from ring codes to binary codes.

\begin{prop}[\cite{WS}]\label{prop:gray}
Let $C$ be a linear code over $R$. Then
$$
\phi(C^\perp)=\phi(C)^\perp.
$$
Consequently, $C$ is Euclidean LCD if and only if $\phi(C)$ is a binary Euclidean LCD code.
\end{prop}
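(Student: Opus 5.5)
The plan is to verify the identity $\phi(C^\perp)=\phi(C)^\perp$ in two steps: an inclusion, which reduces to a coordinatewise computation, and then a dimension count. The LCD equivalence will follow formally because $\phi$ is a bijection.

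\emph{Inclusion.} Since $\phi$ is $\mathbb F_2$-linear, it suffices to check the inclusion $\phi(C^\perp)\subseteq\phi(C)^\perp$ on pairs of vectors. Take $\mathbf x\in C$ and $\mathbf y\in C^\perp$, and write $x_i=a_i+ub_i$ and $y_i=c_i+ud_i$. Then
$$
\langle\mathbf x,\mathbf y\rangle_E=\sum_i a_ic_i+u\sum_i(a_id_i+b_ic_i)=0 \in R.
$$
This gives $\sum a_ic_i=0$ and $\sum(a_id_i+b_ic_i)=0$ over $\mathbb F_2$. The binary inner product of the Gray images is
$$
\sum_i\bigl(b_id_i+(a_i+b_i)(c_i+d_i)\bigr)=\sum_i\bigl(a_ic_i+a_id_i+b_ic_i\bigr)+2\sum_i b_id_i .
$$
The last sum vanishes in characteristic $2$, and the first sum vanishes by the two conditions above. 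Hence $\phi(\mathbf x)\cdot\phi(\mathbf y)=0$, which proves the inclusion.

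\emph{Equality by counting.} Because $\phi$ is a bijection, $|\phi(C^\perp)|=|C^\perp|$, and $|\phi(C)^\perp|=2^{2n}/|\phi(C)|=4^n/|C|$. So equality follows once we know $|C|\,|C^\perp|=|R|^n=4^n$. This is the main obstacle, since $R$ is not a field. I would use the standard fact that $R$ is a finite Frobenius ring, so the map $\mathbf y\mapsto(\mathbf x\mapsto\langle\mathbf x,\mathbf y\rangle_E)$ yields $|C^\perp|=|R^n|/|C|$. Alternatively, I would argue directly via the generator matrix in standard form $\begin{pmatrix}I_{k_1}&A&B\\0&uI_{k_2}&uD\end{pmatrix}$. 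Here $|C|=4^{k_1}2^{k_2}$, and the usual explicit parity-check matrix shows $|C^\perp|=4^{n-k_1-k_2}2^{k_2}$. The product is $4^n$.

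\emph{LCD consequence.} Since $\phi$ is injective, $\phi(C\cap C^\perp)=\phi(C)\cap\phi(C^\perp)=\phi(C)\cap\phi(C)^\perp$. The left side is $\{0\}$ exactly when $C\cap C^\perp=\{0\}$, so $C$ is Euclidean LCD if and only if $\phi(C)$ is a binary Euclidean LCD code.
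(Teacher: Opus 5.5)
Your proof is correct. The paper gives no argument of its own here: it imports the statement as the Euclidean binary specialization of \cite[Theorem~3.3]{WS}. What you have written is therefore a self-contained replacement for that citation rather than a competing route. Your two steps are the usual verification for Gray maps over $\mathbb F_2+u\mathbb F_2$: the coordinatewise inclusion $\phi(C^\perp)\subseteq\phi(C)^\perp$, and the size identity $|C|\,|C^\perp|=4^n$, which holds because $R$ is a finite chain ring (hence Frobenius) or directly from the standard form. The citation buys generality over other rings and inner products, while your argument makes this special case transparent.

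You can also drop the counting input. Your computation shows that if $\langle\mathbf x,\mathbf y\rangle_E=\alpha+u\beta$ with $\alpha,\beta\in\mathbb F_2$, then $\phi(\mathbf x)\cdot\phi(\mathbf y)=\alpha+\beta$. Applying this to $u\mathbf x\in C$, for which $\langle u\mathbf x,\mathbf y\rangle_E=u\alpha$, gives $\phi(u\mathbf x)\cdot\phi(\mathbf y)=\alpha$. So if $\phi(\mathbf y)\in\phi(C)^\perp$, then $\alpha=0$ and $\alpha+\beta=0$ for every $\mathbf x\in C$, whence $\mathbf y\in C^\perp$. Together with the surjectivity of $\phi$, this yields $\phi(C)^\perp\subseteq\phi(C^\perp)$ using only that $C$ is closed under multiplication by $u$.
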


\subsection{Factorization and reciprocal factors}
A monic polynomial over a finite local ring is called \emph{basic irreducible} if its reduction modulo the maximal ideal is irreducible over the residue field. For a polynomial $f(x)$ with nonzero constant term, its reciprocal is defined as
$$
f^*(x)=x^{\deg f}f(x^{-1}).
$$
For odd $n$, let $C_a=\{a2^i\pmod n\mid i\ge0\}$ be the $2$-cyclotomic coset containing $a$, and let $f_{C_a}(x)$ be the corresponding irreducible factor of $x^n-1$ over $\mathbb F_2$. We collect the standard facts in the following proposition.

\begin{prop}[\cite{LN,LS,LX,SJLU}]\label{prop:factorization}
Let $n$ be an odd positive integer.
\begin{enumerate}
\item [(1)] The basic irreducible factorization of $x^n-1$ over $R$ is in one-to-one correspondence with the irreducible factorization of $x^n-1$ over $\mathbb F_2$ by reduction modulo $u$ and the corresponding factors have the same degrees.
\item [(2)] For every $a$,
$$
f_{C_a}^*(x)=f_{C_{-a}}(x).
$$
If $a\ne0$ and $d=n/\gcd(a,n)$, then $f_{C_a}$ is self-reciprocal if and only if
$$
2^j\equiv-1\pmod d
$$
for some integer $j$.
\item [(3)] All nontrivial irreducible factors of $x^n-1$ (except trivial one $x-1$) are self-reciprocal if and only if $2^j\equiv-1\pmod n$ for some $j$. If $\operatorname{ord}_n(2)$ is odd, then every nontrivial irreducible factor occurs in a reciprocal pair.
\end{enumerate}
\end{prop}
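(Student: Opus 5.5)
Proposition~\ref{prop:factorization} collects standard facts, so the plan is to prove each part from the cyclotomic description of the roots of $x^n-1$ and Hensel lifting.

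\emph{Part (1).} Since $n$ is odd, $x^n-1$ is separable over $\mathbb F_2$: its derivative $nx^{n-1}=x^{n-1}$ is coprime to $x^n-1$. So $x^n-1=\prod_i \bar g_i(x)$ with pairwise coprime monic irreducible $\bar g_i$. I will apply Hensel's lemma over the local ring $R$ with maximal ideal $(u)$. It lifts this factorization uniquely to $x^n-1=\prod_i g_i(x)$ with pairwise coprime monic $g_i$ and $g_i\equiv\bar g_i \pmod u$. Each $g_i$ is basic irreducible by definition. Monicity gives $\deg g_i=\deg\bar g_i$. Uniqueness of Hensel lifts of coprime factorizations gives the one-to-one correspondence. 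In fact $g_i=\bar g_i$ viewed in $R[x]$ already works, since the factorization over $\mathbb F_2\subset R$ is exact.

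\emph{Part (2).} Fix a primitive $n$th root of unity $\beta$ in a splitting field. Then $f_{C_a}(x)=\prod_{s\in C_a}(x-\beta^s)$, and each $\beta^s\neq0$. For a polynomial $f=\prod(x-\alpha)$ of degree $m$ with nonzero roots, $f^*(x)=x^m f(x^{-1})=\prod(1-\alpha x)$. Up to the unit $\prod(-\alpha)$, this has roots $\alpha^{-1}$. The leading coefficient is $\prod(-\alpha)=f(0)$, which equals $1$ in characteristic $2$ for an irreducible factor of $x^n-1$. So $f_{C_a}^*$ is monic with root set $\{\beta^{-s}: s\in C_a\}$, which is exactly the root set of $f_{C_{-a}}$. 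This proves $f_{C_a}^*=f_{C_{-a}}$.

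It follows that $f_{C_a}$ is self-reciprocal iff $C_a=C_{-a}$, i.e.\ iff $-a\equiv a2^j\pmod n$ for some $j$. Put $g=\gcd(a,n)$ and $d=n/g$, so $a=ga'$ with $\gcd(a',d)=1$. The congruence $a(2^j+1)\equiv0\pmod n$ is equivalent to $a'(2^j+1)\equiv0\pmod d$. Since $a'$ is invertible mod $d$, this is $2^j\equiv-1\pmod d$.

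\emph{Part (3).} For the first claim, suppose $2^j\equiv-1\pmod n$. Every divisor $d>1$ of $n$ then satisfies $2^j\equiv-1\pmod d$, so every $f_{C_a}$ with $a\ne0$ is self-reciprocal by (2). Conversely, take $a=1$, so $d=n$; self-reciprocity of $f_{C_1}$ forces $2^j\equiv-1\pmod n$.

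For the second claim, assume $\operatorname{ord}_n(2)$ is odd, and suppose to the contrary that some $a\ne0$ has $C_a=C_{-a}$. Then $2^j\equiv-1\pmod d$ with $d>1$, so $\operatorname{ord}_d(2)$ divides $2j$ but not $j$, and $\operatorname{ord}_d(2)$ is even. But $\operatorname{ord}_d(2)$ divides $\operatorname{ord}_n(2)$, which is odd, a contradiction. Hence $C_a\neq C_{-a}$ for all $a\neq0$, and by (2) the nontrivial factors come in distinct reciprocal pairs $\{f_{C_a},f_{C_{-a}}\}$.

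\emph{Main obstacle.} The only delicate point is part (1), namely making the uniqueness in Hensel's lemma precise enough to give the one-to-one correspondence of factors. I expect to handle this by citing the standard result on lifting coprime factorizations over finite chain rings, as in \cite{LS}. The remaining parts are elementary cyclotomic bookkeeping.
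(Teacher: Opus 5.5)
Your proposal is correct and follows the same route as the paper: square-freeness of $x^n-1$ plus Hensel lifting for (1), the cyclotomic-coset description of the roots for (2), and the divisibility $\operatorname{ord}_d(2)\mid\operatorname{ord}_n(2)$ for (3). The only difference is that you derive (3) directly from (2) rather than citing the good-pair criterion of \cite[Lemma~3]{SJLU}, which makes your argument self-contained.
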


The derivative of $(x^n-1)$ is $x^{n-1}$ over $\mathbb F_2$ since $n$ is odd, then $x^n-1$ is square-free and assertion~(1) follows from Hensel lifting \cite{LS}. Assertion~(2) follows from the standard description of irreducible factors in terms of binary cyclotomic cosets \cite{LN,LX}. For assertion~(3), the first statement follows from the good-pair criterion in \cite[Lemma~3]{SJLU}. For the second statement, if $\operatorname{ord}_n(2)$ is odd, then $\operatorname{ord}_d(2)$ is odd for every divisor $d>1$ of $n$. Hence, by \cite[Lemma~3]{SJLU}, all the corresponding irreducible factors occur in reciprocal pairs.

Notice that the self-reciprocity of a cyclotomic polynomial does not force each of its irreducible factors over $\mathbb F_2$ to be self-reciprocal. For example,
$$
x^7-1=(x-1)(x^3+x+1)(x^3+x^2+1),
$$
where the two cubic factors form a reciprocal pair. This distinction is the one relevant to the CRT decomposition below.

\section{Double circulant LCD codes}\label{DC}
\subsection{Algebraic structure} \label{al}
In this subsection, we assume that $n$ is an odd integer. Every double circulant code of length $2n$ may be viewed as a code of length $2$ over the ring $A_n=R[x]/(x^n-1),$ as stated in \cite{LS}. We assume that the factorization of $x^n-1$ into basic irreducible polynomials over $R$ is of the following form: $$x^n-1= (x-1)\prod\limits_{i=1}^sg_i(x)\prod\limits_{j=1}^th_j(x)h_j^*(x),$$ where $g_i(x)$ is a monic self-reciprocal basic irreducible polynomial of degree $2d_i$ for $1\le i\le s$, and $h_j(x),h_j^*(x)$ form a pair of reciprocal basic irreducible polynomials of degree $e_j$ for $1\le j\le t.$

By using the Chinese Remainder Theorem (CRT) to break down the ring $A_n=R[x]/(x^n-1)$, we have
\begin{align*}
	A_n=\frac{R[x]}{(x^n-1)} &\simeq \frac{R[x]}{(x-1)} \oplus\Bigg(\bigoplus\limits_{i=1}^s\frac{R[x]}{(g_i(x))}\Bigg)\oplus \Bigg(\bigoplus\limits_{j=1}^t\bigg(\frac{R[x]}{(h_j(x))}\oplus \frac{R[x]}{(h_j^*(x))}\bigg)\Bigg) \\
	&\simeq R\oplus \Bigg(\bigoplus\limits_{i=1}^s\frac{\mathbb{F}_2[u,x]}{(u^2,g_i(x))}\Bigg)\oplus \Bigg(\bigoplus\limits_{j=1}^t\bigg(\frac{\mathbb{F}_2[u,x]}{(u^2,h_j(x))}\oplus \frac{\mathbb{F}_2[u,x]}{(u^2,h_j^*(x))}\bigg)\Bigg) \\
	&\simeq R\oplus \Bigg(\bigoplus\limits_{i=1}^s\mathbb{F}_{2^{2d_i}}+u\mathbb{F}_{2^{2d_i}}\Bigg)\oplus \Bigg(\bigoplus\limits_{j=1}^t\big(\mathbb{F}_{2^{e_j}}+u\mathbb{F}_{2^{e_j}}\big)\oplus\big(\mathbb{F}_{2^{e_j}}+u\mathbb{F}_{2^{e_j}}\big)\Bigg) \\
	&\simeq R\oplus \Bigg(\bigoplus\limits_{i=1}^s\mathcal R_{2d_i}\Bigg)\oplus \Bigg(\bigoplus\limits_{j=1}^t\mathcal R_{e_j}\oplus \mathcal R_{e_j} \Bigg).
\end{align*}
This decomposition naturally extends to $A_n^2$ as $$A_n^2\simeq R^2\oplus \Bigg(\bigoplus\limits_{i=1}^s\big(\mathcal R_{2d_i}\big)^2\Bigg)\oplus \Bigg(\bigoplus\limits_{j=1}^t\big(\mathcal R_{e_j}\big)^2\oplus\big(\mathcal R_{e_j}\big)^2\Bigg).$$ In particular, each linear code of length $2$ over $A_n$ can be decomposed as the "CRT sum"
\begin{align}\label{eq:1}
	C\simeq C_0\oplus \Bigg(\bigoplus\limits_{i=1}^sC_i\Bigg)\oplus\Bigg(\bigoplus\limits_{j=1}^t\big(C_j^{\prime}\oplus C_j^{\prime\prime}\big)\Bigg),
\end{align}where $C_0$ is a linear code of length 2 over $R$, $C_i$ is a linear code of length 2 over $\mathcal R_{2d_i}$ for each $1\le i\le s,$ and for each $1\le j\le t,$ $C_j^{\prime}$ and $C_j^{\prime\prime}$ are linear codes of length 2 over $\mathcal R_{e_j}$. These codes are called the constituents of $C.$

For a self-reciprocal basic irreducible factor $g_i(x)$ of degree $2d_i$, the corresponding constituent ring is identified with
$\mathcal R_{2d_i}=\mathbb F_{2^{2d_i}}+u\mathbb F_{2^{2d_i}}$.
The reciprocal action induces the conjugation
$$
\overline{a+ub}=a^{2^{d_i}}+ub^{2^{d_i}},
\,a,b\in\mathbb F_{2^{2d_i}},
$$
which defines an involution on $\mathcal R_{2d_i}$, since
$\overline{\overline{c}}=c$ for every $c\in\mathcal R_{2d_i}$.

Duality under the above CRT decomposition is standard for quasi-cyclic codes over finite chain rings \cite{LS,QSS}. In particular, the self-reciprocal factors induce Hermitian duality on the corresponding constituents, whereas the constituents associated with reciprocal pairs are interchanged under Euclidean duality. Hence, $C$ is Euclidean LCD if and only if $C_0$ is Euclidean LCD, each $C_i$ is Hermitian LCD, and
$$
C_j'\cap (C_j'')^\perp=\{0\},\,
C_j''\cap (C_j')^\perp=\{0\}.
$$
Since each constituent considered here is generated by a vector of the form $(1,c)$, these conditions can be checked directly from the inner products of their generators and reduce to unit conditions in the corresponding constituent rings. This gives the following criterion.

\begin{lem}\label{lem:1}
	Let $C=\langle(1,c)\rangle$ be a double circulant code of length $2n$ with a CRT decomposition as in (\ref{eq:1}). Let $\alpha_0=(1,c_0),$ $\alpha_i=(1,c_i),$ $\alpha_j^{\prime}=(1,c_j^{\prime})$ and $\alpha_j^{\prime\prime}=(1,c_j^{\prime\prime})$ be the generators of $C_0,C_i,C_j^{\prime}$ and $C_j^{\prime\prime},$ respectively, where $1\le i\le s$ and $1\le j\le t.$ Then
	\begin{enumerate}
		\item [(1)] $C_0$ is LCD if and only if $1+c_0^2\in R^*$;
		\item [(2)] $C_i$ is Hermitian LCD if and only if $1+c_i\bar{c_i}\in \mathcal R_{2d_i}^*$, where $\bar{c_i}={a}^{2^{d_i}}+u{b}^{2^{d_i}}$ for $c_i=a+ub\in \mathcal R_{2d_i},a,b\in\F_{2^{2d_i}}$;
		\item [(3)] $C_j^{\prime}\oplus C_j^{\prime\prime}$ is LCD if and only if $1+c_j^{\prime}c_j^{\prime\prime}\in \mathcal R_{e_j}^*$.
	\end{enumerate}
\end{lem}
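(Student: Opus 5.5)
The plan is to reduce each LCD condition to a statement about a free rank-one module of length $2$ over a local ring. The standard tool is the Gram-matrix criterion: a free code $D=\langle \mathbf g\rangle$ over a finite commutative local ring $\mathcal A$ (with identity or an involution as appropriate) satisfies $D\cap D^{\perp}=\{0\}$ if and only if the $1\times 1$ Gram entry $\langle \mathbf g,\mathbf g\rangle$ is a unit. The same argument handles all three parts, with the inner product and pairing adjusted in each case. Each generator $\alpha=(1,c)$ has first coordinate $1$, so $r\alpha=0$ forces $r=0$; hence every constituent is free of rank one, and every element of it has the form $r\alpha$.

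For (1), I would take $\alpha_0=(1,c_0)$ over $R$ and compute $\langle r\alpha_0,\alpha_0\rangle_E=r(1+c_0^2)$. Thus $r\alpha_0\in C_0^\perp$ if and only if $r(1+c_0^2)=0$. If $1+c_0^2\in R^*$, this gives $r=0$, so $C_0$ is LCD. If $1+c_0^2$ is not a unit, then $1+c_0^2\in(u)$ because $R$ is local with maximal ideal $(u)$. Taking $r=u$ gives $u(1+c_0^2)\in(u^2)=0$, so $0\neq u\alpha_0\in C_0\cap C_0^\perp$.

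Part (2) is the same argument over the local ring $\mathcal R_{2d_i}$, using the Hermitian form with the conjugation described before the lemma. One has $\langle r\alpha_i,\alpha_i\rangle_H=r(1+c_i\bar c_i)$. In the non-unit case, $1+c_i\bar c_i$ lies in the maximal ideal $u\mathcal R_{2d_i}$, and $r=u$ again produces a nonzero element of the intersection. Here I would also note that the involution preserves $u$, so the maximal ideal is stable under it; this makes the Hermitian dual of a submodule again a submodule, and the argument goes through.

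For (3), I would use the pairing description stated before the lemma: $C_j'\oplus C_j''$ is LCD if and only if $C_j'\cap(C_j'')^\perp=0$ and $C_j''\cap(C_j')^\perp=0$, where $\perp$ is taken with respect to the bilinear pairing $\langle\mathbf x,\mathbf y\rangle=x_1y_1+x_2y_2$ on $\mathcal R_{e_j}^2$. Then $\langle r\alpha_j',\alpha_j''\rangle=r(1+c_j'c_j'')$ and symmetrically $\langle r\alpha_j'',\alpha_j'\rangle=r(1+c_j'c_j'')$. Both conditions are therefore equivalent to the same statement, namely that $1+c_j'c_j''$ is a unit; in the non-unit case, $r=u$ gives a witness exactly as before.

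The main obstacle is not the computation but justifying the framework itself. One must check that the CRT identification carries Euclidean duality on $A_n^2$ to the stated forms on the constituents: Euclidean on the $x-1$ component, Hermitian for the self-reciprocal $g_i$ with the $2^{d_i}$-Frobenius conjugation, and the cross pairing between the $h_j$ and $h_j^*$ components. In addition, the generators must be normalized so that the pairing of the $h_j$ and $h_j^*$ components is exactly $1+c_j'c_j''$ and not a twisted version of it. I would take this duality correspondence from \cite{LS,QSS}, as the paper already does. The remaining content is then the local-ring argument above, in which the element $u$ always serves as the annihilating witness.
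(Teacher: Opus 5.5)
Your proposal is correct and follows essentially the same route as the paper. Both take the CRT duality correspondence from the quasi-cyclic literature (Euclidean on the $x-1$ constituent, Hermitian with the $2^{d_i}$-Frobenius conjugation on self-reciprocal constituents, a cross pairing on reciprocal pairs), then reduce each condition to whether the inner product of the generator $(1,c)$ with itself, or with its partner, is a unit. You additionally spell out the converse direction that the paper only asserts can be ``checked directly'': a non-unit lies in the maximal ideal $(u)$, so $u\alpha$ is a nonzero element of the relevant intersection.
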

\subsection{Enumeration}
Enumeration formulas for index-two quasi-twisted LCD codes over $S_{q,k}=\mathbb F_q[u]/(u^k)$ are given in \cite{QSS}. In particular, the contributions arising from the self-reciprocal constituents and reciprocal pairs specialize to
$2^{4d_i}-2^{3d_i}-2^{2d_i}$ and $2^{4e_j}-2^{3e_j}+2^{2e_j}$ respectively for $q=2$ and $k=2$.

The constituent $C_0$ corresponding to $x-1$ requires separate consideration in characteristic two. For
$$
c=c_0+c_1u+\cdots+c_{k-1}u^{k-1}\in S_{q,k},
$$
the code $\langle(1,c)\rangle$ is Euclidean LCD if and only if
$1+c_0^2\ne0$ in $\mathbb F_q$. Hence the number of admissible choices is
$(q-r_q)q^{k-1}$, where
$r_q=\#\{a\in\mathbb F_q:a^2=-1\}$.
For $q=2$ and $k=2$, this number is $2$, corresponding to $c=0$ and $c=u$.
We therefore obtain the following enumeration formula for double circulant LCD codes over
$R=\mathbb F_2+u\mathbb F_2$.

\begin{thm}\label{thm:2}
	Let $n$ be an odd integer. Assume that the factorization of $x^n-1$ into basic irreducible polynomials over $R$ is of the form $$x^n-1=(x-1)\prod\limits_{i=1}^sg_i(x)\prod\limits_{j=1}^th_j(x)h_j^*(x),$$ with $n=1+\sum\limits_{i=1}^s2d_i+2\sum\limits_{j=1}^te_j.$ Then the total number of double circulant LCD codes over $R$ is $$2\prod\limits_{i=1}^s\big(2^{4d_i}-2^{3d_i}-2^{2d_i}\big)\prod\limits_{j=1}^t\big(2^{4e_j}-2^{3e_j}+2^{2e_j}\big).$$
\end{thm}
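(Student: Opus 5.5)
The plan is to count the elements $c\in A_n$ for which $C=\langle(1,c)\rangle$ is Euclidean LCD, since each double circulant code with generator matrix $(I,A)$ corresponds to exactly one $c$. Under the CRT isomorphism, $c$ corresponds bijectively to a tuple $(c_0,(c_i)_{i},(c_j',c_j'')_j)$ with $c_0\in R$, $c_i\in\mathcal R_{2d_i}$ and $c_j',c_j''\in\mathcal R_{e_j}$. By Lemma~\ref{lem:1} the LCD property is the conjunction of independent conditions, one per constituent (or per reciprocal pair). Hence the total number is the product of the local counts, and it remains to compute each factor.

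\emph{Constituent $C_0$.} The condition is $1+c_0^2\in R^*$. For $c_0=a+ub$ we have $c_0^2=a^2$, so $1+c_0^2=1+a$, which is a unit iff $a=0$. This gives $c_0\in\{0,u\}$ and the factor $2$.

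\emph{Reciprocal pairs.} In $\mathcal R_{e}$ an element is a unit iff its residue mod $u$ is nonzero. So we count pairs $(c',c'')\in\mathcal R_e^2$ whose residues $(a',a'')\in\F_{2^e}^2$ satisfy $1+a'a''\neq 0$, each residue pair having $2^{2e}$ lifts. If $a'=0$, all $2^e$ choices of $a''$ work. If $a'\neq0$, all $a''$ except $a''=a'^{-1}$ work, giving $(2^e-1)^2$ pairs. This yields $2^{2e}\bigl(2^e+(2^e-1)^2\bigr)=2^{4e}-2^{3e}+2^{2e}$.

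\emph{Self-reciprocal constituents.} Writing $c=a+ub$ with $a,b\in\F_{2^{2d}}$, the residue of $1+c\bar c$ is $1+a^{2^d+1}=1+N(a)$, where $N$ is the norm to $\F_{2^d}$. We need $N(a)\neq1$ (in characteristic $2$, $-1=1$). The norm is surjective with fibres of size $2^d+1$ over $\F_{2^d}^*$, so exactly $2^d+1$ values of $a$ are excluded. With $2^{2d}$ free choices of $b$, the count is $2^{2d}(2^{2d}-2^d-1)=2^{4d}-2^{3d}-2^{2d}$.

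Multiplying the three kinds of factors gives the stated formula. The main point needing care is the bijection between double circulant codes and elements $c$, together with the independence of the constituent conditions; both follow from the CRT decomposition and Lemma~\ref{lem:1}. The only genuinely computational step is the norm-fibre count in the Hermitian case.
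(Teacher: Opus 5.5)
Your proof is correct and follows the paper's route. Both arguments use the CRT bijection between codes $\langle(1,c)\rangle$ and tuples of constituent entries, apply Lemma~\ref{lem:1} to split the LCD condition constituentwise, and multiply the local counts. The only difference is that you compute the self-reciprocal factor (via the norm-fibre count) and the reciprocal-pair factor (via counting residue pairs with $a'a''\neq 1$) directly, whereas the paper obtains them by specializing the formulas of \cite{QSS} to $q=2$, $k=2$; this makes your argument self-contained.
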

\begin{proof}
	By Lemma~\ref{lem:1}, the LCD condition is determined constituentwise under the CRT decomposition. The constituent $C_0$ contributes $2$ choices, while the self-reciprocal constituents and reciprocal pairs contribute
	$2^{4d_i}-2^{3d_i}-2^{2d_i}$ and $2^{4e_j}-2^{3e_j}+2^{2e_j},$ respectively. Multiplying these independent contributions over all CRT constituents gives
	$$
	2\prod_{i=1}^s
	\big(2^{4d_i}-2^{3d_i}-2^{2d_i}\big)
	\prod_{j=1}^t
	\big(2^{4e_j}-2^{3e_j}+2^{2e_j}\big).
	$$
	The CRT correspondence is bijective, so this counts distinct double circulant LCD codes.
\end{proof}

By Proposition~\ref{prop:factorization}, the factorization of $x^n-1$ over $R$ falls into the following three cases.
\begin{coro} \label{coro:2}
	Let $n$ be an odd integer and $\mathrm{ord}_n(2)=m$. We can characterize the factorization of $x^n-1$ into distinct basic irreducible polynomials over $R$ as follows.
	\begin{enumerate}
		\item [(1)] If $m$ is an odd integer, then all nontrivial irreducible factors occur in reciprocal pairs. Hence $$x^n-1=(x-1)\prod\limits_{j=1}^th_j(x)h_j^*(x),$$ where $h_j(x)$ and $h_j^*(x)$ are reciprocal polynomials of degree $e_j$ and $n=1+2\sum\limits_{j=1}^te_j$. Then the total number of double circulant LCD codes over $R$ is $$2\prod\limits_{j=1}^t\big(2^{4e_j}-2^{3e_j}+2^{2e_j}\big).$$
		\item [(2)] If $m$ is an even integer and there exists an integer $i$ such that $2^i\equiv -1\pmod{n}$, all nontrivial irreducible factors are self-reciprocal. Hence $$x^n-1=(x-1)\prod\limits_{i=1}^sg_i(x),$$ where $g_i(x)=g_i^*(x)$ are self-reciprocal polynomials of degree $2d_i$ and $n=1+\sum\limits_{i=1}^s2d_i$. Then the total number of double circulant LCD codes over $R$ is $$2\prod\limits_{i=1}^s\big(2^{4d_i}-2^{3d_i}-2^{2d_i}\big).$$
		\item [(3)] In the remaining cases, we have $$x^n-1=(x-1)\prod\limits_{i=1}^sg_i(x)\prod\limits_{j=1}^th_j(x)h_j^*(x),$$ where $g_i(x)$ is a self-reciprocal polynomial with degree $2d_i$ and $h_j^*(x)$ is the reciprocal polynomial of $h_j(x)$ with degree $e_j$. Then the total number of double circulant LCD codes over $R$ is $$2\prod\limits_{i=1}^s\big(2^{4d_i}-2^{3d_i}-2^{2d_i}\big)\prod\limits_{j=1}^t\big(2^{4e_j}-2^{3e_j}+2^{2e_j}\big),$$ which has been obtained in Theorem \ref{thm:2}.
	\end{enumerate}
\end{coro}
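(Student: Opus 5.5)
The plan is to obtain Corollary~\ref{coro:2} by specializing Theorem~\ref{thm:2}. The only new content is to identify, in each case, which of the two families of factors (self-reciprocal ones or reciprocal pairs) can actually appear. This identification comes from Proposition~\ref{prop:factorization}. First, by part~(1) of that proposition, the basic irreducible factorization of $x^n-1$ over $R$ has the same shape as the binary one. It therefore suffices to classify the nontrivial irreducible factors $f_{C_a}$, $a\ne0$, of $x^n-1$ over $\mathbb F_2$. Once the classification is done, each count follows by setting $t=0$ or $s=0$ in the formula of Theorem~\ref{thm:2}, with the empty product equal to $1$. The degree identity $n=1+\sum 2d_i+2\sum e_j$ is just $\deg(x^n-1)=n$.

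For case~(1), assume $m=\operatorname{ord}_n(2)$ is odd. The second statement of Proposition~\ref{prop:factorization}(3) gives that every nontrivial factor lies in a reciprocal pair, so $s=0$. Directly: a self-reciprocal $f_{C_a}$ would require $2^j\equiv-1\pmod d$ with $d=n/\gcd(a,n)>1$. Then $\operatorname{ord}_d(2)$ would be even, since $2^{2j}\equiv1$ while $2^j\not\equiv1$ because $d\ge3$. But $\operatorname{ord}_d(2)$ divides $m$, which is odd, a contradiction. Substituting $s=0$ into Theorem~\ref{thm:2} yields the stated count.

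For case~(2), the hypothesis $2^i\equiv-1\pmod n$ gives, by the first statement of Proposition~\ref{prop:factorization}(3), that all nontrivial factors are self-reciprocal, so $t=0$. I would also check that each such factor has even degree, which justifies writing its degree as $2d_i$. The degree of $f_{C_a}$ is $|C_a|=\operatorname{ord}_d(2)$. From $2^i\equiv-1\pmod d$ with $d\ge3$, the order $\operatorname{ord}_d(2)$ is even, exactly as in the argument above. In fact this same argument shows that every self-reciprocal factor has even degree, which is implicitly assumed in Theorem~\ref{thm:2}. The evenness of $m$ assumed in case~(2) is then automatic.

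Case~(3) is literally Theorem~\ref{thm:2}, and nothing further is needed. The main obstacle is minor. It is to make sure the case split is exhaustive and consistent: if $m$ is even but no $i$ satisfies $2^i\equiv-1\pmod n$, then both families may occur, and the general formula covers this. The only genuinely argued step is the parity observation that $2^j\equiv-1\pmod d$ forces $\operatorname{ord}_d(2)$ to be even. I would present this once and use it in both case~(1) and case~(2).
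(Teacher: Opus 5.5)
Your proposal is correct and follows the paper's route. The paper also obtains the corollary by invoking Proposition~\ref{prop:factorization} to decide which of the two kinds of factors occur in each case, and then specializing Theorem~\ref{thm:2} with $s=0$ or $t=0$. Your explicit parity argument, that $2^j\equiv-1\pmod d$ with $d\ge3$ forces $\operatorname{ord}_d(2)$ to be even, is a worthwhile supplement the paper leaves implicit: it justifies both $s=0$ in case~(1) and the even degrees $2d_i$ of the self-reciprocal factors.
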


There are several examples computed by Magma \cite{magma} about the factorization of $x^n-1$ for small $n$. These examples also illustrate the distinction between self-reciprocal irreducible factors and reciprocal pairs of irreducible factors.

\begin{ex}
	{\rm Let $n=7,23,31$. We have $\mathrm{ord}_7(2)=3,\mathrm{ord}_{23}(2)=11,\mathrm{ord}_{31}(2)=5$, then by Corollary \ref{coro:2},
		\begin{align*}
			x^7-1= &\; (x - 1)(x^3 + x + 1)(x^3 + x^2 + 1),\\
			x^{23}-1= &\; (x - 1)(x^{11} + x^9 + x^7 + x^6 + x^5 + x + 1)(x^{11} + x^{10} + x^6 + x^5 + x^4 + x^2 + 1), \\
			x^{31}-1= &\; (x - 1)(x^5 + x^2 + 1)(x^5 + x^3 + 1)(x^5 + x^3 + x^2 + x + 1)(x^5 + x^4 + x^2 + x + 1)\\&\;(x^5 + x^4 + x^3 + x + 1)(x^5 + x^4 + x^3 + x^2 + 1).
	\end{align*}}
\end{ex}

\begin{ex}
	{\rm Let $n=5,9,11$. We have $\mathrm{ord}_5(2)=4,\mathrm{ord}_{9}(2)=6,\mathrm{ord}_{11}(2)=10$ and $$2^2\equiv -1\pmod{5},\,2^3\equiv -1\pmod{9},\,2^5\equiv -1\pmod{11},$$ then by Corollary \ref{coro:2},
	\begin{align*}
		x^5-1&=(x - 1)(x^4 + x^3 + x^2 + x + 1),\\
		x^9-1&=(x-1)(x^2 + x + 1)(x^6 + x^3 + 1),\\
		x^{11}-1 &=(x - 1)(x^{10} + x^9 + x^8 + x^7 + x^6 + x^5 + x^4 + x^3 + x^2 + x + 1).
\end{align*} }
\end{ex}

\begin{ex}
{\rm Let $n=15,21,35$. By Corollary \ref{coro:2} we have
	$$x^{15}-1=(x - 1)(x^2 + x + 1)(x^4 + x + 1)(x^4 + x^3 + 1)(x^4 + x^3 + x^2 + x + 1),$$ where $x^2 + x + 1$ and $x^4 + x^3 + x^2 + x + 1$ are self-reciprocal polynomials over $R$.
\begin{align*}
	x^{21}-1= &\;(x - 1)(x^2 + x + 1)(x^3 + x + 1)(x^3 + x^2 + 1)(x^6 + x^4 + x^2 + x + 1) \\&\; (x^6 + x^5 + x^4 + x^2 + 1),
\end{align*}where $x^2 + x + 1$ is self-reciprocal polynomial over $R$.
\begin{align*}
	x^{35}-1= &\; (x - 1)(x^3 + x + 1)(x^3 + x^2 + 1)(x^4 + x^3 + x^2 + x + 1)(x^{12} + x^{10} + x^9 + \\&\; x^8 + x^7 + x^4 + x^2 + x + 1)(x^{12} + x^{11} + x^{10} + x^8 + x^5 + x^4 + x^3 + x^2 + 1),
\end{align*} where $x^4 + x^3 + x^2 + x + 1$ is self-reciprocal polynomial over $R$.}
\end{ex}
\section{Asymptotics}\label{Asym}

Asymptotic results for index-two quasi-twisted self-dual and LCD codes over finite chain rings have been obtained by random-coding arguments in \cite{QSS} and related approaches can also be found in \cite{SHSS,SZQSS}. More general factorization results over finite local rings, together with applications to self-dual and LCD codes and asymptotic coding theory, were given in \cite{KFO}. Here we restrict attention to the characteristic-two ring
$R=\mathbb F_2+u\mathbb F_2$. As seen in the preceding section, the constituent corresponding to $x-1$ requires separate treatment when $q=2$. We therefore give the corresponding asymptotic result for double circulant LCD codes over $R$.

We assume Artin's primitive root conjecture for $2$ \cite{Moree2012}, namely that there exist infinitely many odd primes $p$ such that $2$ is a primitive root modulo $p$. Let $p$ be such a prime. Then
$$
x^p-1=(x-1)f_p(x),
\,
f_p(x)=1+x+\cdots+x^{p-1}.
$$
Since $2$ is primitive modulo $p$, the polynomial $f_p(x)$ is irreducible over $\mathbb F_2$ and hence basic irreducible over $R$. Therefore, by the Chinese Remainder Theorem,
$$
A_p=\frac{R[x]}{(x^p-1)}
\simeq
R\oplus
\left(\mathbb F_{2^{p-1}}+u\mathbb F_{2^{p-1}}\right)
=
R\oplus\mathcal R_{p-1}.
$$
Moreover, $f_p(x)$ is self-reciprocal. Under this decomposition, an element of $A_p$ has zero
$\mathcal R_{p-1}$-component if and only if it is a multiple of
$f_p(x)$. Since $(x-1)f_p(x)=0$ in $A_p$, every such element is of
the form
$$
af_p(x)=a(1+x+\cdots+x^{p-1}),\qquad a\in R,
$$
and hence corresponds to a constant vector
$(a,a,\ldots,a)\in R^p$.

For $e\in A_p$, let $C_e=\langle(1,e)\rangle\subseteq A_p^2$
be the corresponding double circulant code.

\begin{lem}\label{lem:asymp-vector}
	Let $w=(\mu,\nu)\in A_p^2$ be nonzero, with $\mu$ not a constant vector. Then there are at most $2^{p+1}$ elements $e\in A_p$ such that $w\in C_e$.
\end{lem}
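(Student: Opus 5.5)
We reduce membership $w\in C_e$ to a linear equation in $A_p$ and count its solutions through the CRT decomposition $A_p\simeq R\oplus\mathcal R_{p-1}$. Since $C_e=\langle(1,e)\rangle=\{(a,ae):a\in A_p\}$, the condition $w=(\mu,\nu)\in C_e$ holds exactly when $a=\mu$ and $\nu=\mu e$. So the number of admissible $e$ is at most the number of solutions $e\in A_p$ of
$$
\mu e=\nu .
$$
This set is either empty or a coset of the annihilator
$$
\mathrm{Ann}(\mu)=\{e\in A_p:\mu e=0\}.
$$
It therefore suffices to show $|\mathrm{Ann}(\mu)|\le 2^{p+1}$.

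Next, write $\mu\leftrightarrow(\mu_0,\mu_1)$ with $\mu_0\in R$ and $\mu_1\in\mathcal R_{p-1}$. Since the CRT map is a ring isomorphism,
$$
\mathrm{Ann}(\mu)\simeq \mathrm{Ann}_R(\mu_0)\times\mathrm{Ann}_{\mathcal R_{p-1}}(\mu_1).
$$
We bound the two factors separately.

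\begin{itemize}
\item \textbf{The $R$-factor.} Trivially $|\mathrm{Ann}_R(\mu_0)|\le|R|=4=2^2$.
\item \textbf{The $\mathcal R_{p-1}$-factor.} Here we use the hypothesis. As noted before the lemma, $\mu$ is constant exactly when its $\mathcal R_{p-1}$-component vanishes. Since $\mu$ is not constant, $\mu_1\neq0$. The ring $\mathcal R_{p-1}$ is a chain ring with ideals $0\subset u\mathcal R_{p-1}\subset\mathcal R_{p-1}$, so $\mu_1$ is either a unit or a nonzero element of $u\mathcal R_{p-1}$. If $\mu_1$ is a unit, its annihilator is $0$. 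If $\mu_1=ub$ with $b\in\mathbb F_{2^{p-1}}^*$, then $(ub)(a'+ub')=uba'$. This vanishes iff $a'=0$, so the annihilator is $u\mathcal R_{p-1}$, of size $2^{p-1}$.
\end{itemize}

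Multiplying the two bounds gives $|\mathrm{Ann}(\mu)|\le 4\cdot 2^{p-1}=2^{p+1}$. Hence at most $2^{p+1}$ elements $e$ satisfy $w\in C_e$.

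The only step needing care is the description of the annihilator in the chain ring $\mathcal R_{p-1}$. This depends on identifying "$\mu$ non-constant" with "$\mu_1\neq0$", which the paper has already established. Nonzeroness of $w$ is not needed beyond this, since we are given that $\mu$ is non-constant.
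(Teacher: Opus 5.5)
Your proof is correct and follows essentially the same route as the paper: split $e$ via $A_p\simeq R\oplus\mathcal R_{p-1}$, allow at most $4$ choices for the $R$-component, and use $\mu_2\neq 0$ to get at most $2^{p-1}$ choices for the $\mathcal R_{p-1}$-component (unique if $\mu_2$ is a unit, free $u$-part if $\mu_2\in u\mathcal R_{p-1}$). Describing the solution set as empty or a coset of $\mathrm{Ann}(\mu)$ is only a tidier packaging of the paper's direct count of solutions to $\nu_1=e_1\mu_1$, $\nu_2=e_2\mu_2$.
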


\begin{proof}
	Under $A_p\simeq R\oplus\mathcal R_{p-1}$, write
	$e=(e_1,e_2),\,
	\mu=(\mu_1,\mu_2),\,
	\nu=(\nu_1,\nu_2),
	$ where $e_1,\mu_1,\nu_1\in R,e_2,\mu_2,\nu_2\in\mathcal R_{p-1}$.
	If $w\in C_e$, then $\nu_1=e_1\mu_1,\,
	\nu_2=e_2\mu_2.$ There are at most $4$ choices for $e_1$. 
	
	Since $\mu$ is not a constant vector, we have $\mu_2\ne0$ by the
	discussion above. If $\mu_2$ is a unit, then $e_2$ is uniquely determined. Otherwise,
	write $\mu_2=u\alpha,\,e_2=a+ub,$
	where $\alpha\in\mathbb F_{2^{p-1}}^*$ and
	$a,b\in\mathbb F_{2^{p-1}}$. Then
	$$
	e_2\mu_2=u a\alpha.
	$$
	Thus $a$ is uniquely determined whenever a solution exists, whereas
	$b$ is arbitrary. Hence there are at most $2^{p-1}$ choices for
	$e_2$. Therefore the number of possible $e$ is at most $4\cdot2^{p-1}=2^{p+1}.$
\end{proof}

For $0\le x\le1/2$, let
$$
H_2(x)=-x\log_2x-(1-x)\log_2(1-x),
$$
with $H_2(0)=0$. We use the standard estimate that a binary Hamming ball of relative radius $\delta<1/2$ in $\mathbb F_2^N$ has size
$2^{NH_2(\delta)+o(N)}$ \cite{HP}.

\begin{thm}\label{thm:asymp}
	Assume Artin's primitive root conjecture for $2$. For every
	$0<\epsilon<H_2^{-1}(1/4)$, there exists a sequence of double circulant LCD codes over
	$R=\mathbb F_2+u\mathbb F_2$
	whose Gray images have rate $1/2$ and relative distance at least
	$$
	H_2^{-1}(1/4)-\epsilon.
	$$
	In particular, under this assumption, this family is asymptotically good.
\end{thm}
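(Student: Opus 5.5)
The plan is a random-coding (counting) argument over the family of double circulant LCD codes $C_e=\langle(1,e)\rangle$, $e\in A_p$, where $p$ runs through the infinitely many primes with $2$ primitive modulo $p$ (Artin's conjecture). The Gray image $\phi(C_e)$ is a binary code of length $N=4p$ and dimension $2p$, since $C_e$ is free of rank $p$ over $R$. It is LCD whenever $C_e$ is, by Proposition~\ref{prop:gray}. By Theorem~\ref{thm:2}, with $s=1$, $2d_1=p-1$ and $t=0$, the number of LCD choices of $e$ is
$$
L_p=2\big(2^{2(p-1)}-2^{3(p-1)/2}-2^{p-1}\big)=2^{2p-1+o(p)} ,
$$
a positive fraction of all $|A_p|=4^p=2^{2p}$ choices. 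So it suffices to show that the number of $e$ for which $\phi(C_e)$ contains a nonzero word of Gray weight less than $\delta N$ is $o(L_p)$, where $\delta=H_2^{-1}(1/4)-\epsilon$.

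\textbf{Bad words with $\mu$ non-constant.} Every nonzero $w=(\mu,\nu)\in C_e$ has the form $(\mu,e\mu)$ with $\mu\neq0$. Count pairs $(w,e)$ with $w\in C_e$, $w\ne0$, $\wt(\phi(w))<\delta N$ and $\mu$ non-constant. The number of such $w$ is at most the size of a binary Hamming ball of relative radius $\delta$ in $\mathbb F_2^{4p}$, namely $2^{4pH_2(\delta)+o(p)}$. By Lemma~\ref{lem:asymp-vector}, each such $w$ lies in at most $2^{p+1}$ codes $C_e$. Hence the number of bad $e$ from this case is at most
$$
2^{4pH_2(\delta)+p+o(p)} .
$$
Since $H_2(\delta)<1/4$ strictly, this exponent is $p(1+4H_2(\delta))+o(p)<2p-1$ for large $p$. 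Equivalently, $4H_2(\delta)+1<2$, so this count is $o(L_p)$. This is exactly where the threshold $H_2^{-1}(1/4)$ comes from.

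\textbf{Bad words with $\mu$ constant.} This is the step I expect to be the main obstacle, because Lemma~\ref{lem:asymp-vector} does not apply. Here $\mu=af_p(x)$ with $a\in R\setminus\{0\}$, so $\mu$ is a constant vector. Then $e\mu=a\,e(1)f_p(x)$ is also constant, since $xf_p=f_p$ in $A_p$. So $w=(a,\dots,a,b,\dots,b)$ with $a\neq0$. Its Gray weight is at least $\wt(\phi(a))\cdot p\ge p$, because every nonzero element of $R$ has Gray weight $1$ or $2$. Thus $\wt(\phi(w))\ge p=N/4$. Since $\delta<H_2^{-1}(1/4)<1/4$, such words are never of low weight, so no $e$ is excluded in this case. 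I would double-check this weight computation carefully, including the unit versus non-unit values of $a$; for example $\phi(u)=(1,1)$ has weight $2$.

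\textbf{Conclusion.} For $p$ large there is an LCD $e$ for which $\phi(C_e)$ has minimum distance at least $\delta\cdot 4p$, with rate exactly $1/2$. Letting $p\to\infty$ along Artin primes gives the claimed sequence, and since $\delta>0$ the family is asymptotically good.
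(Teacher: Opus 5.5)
Your proposal is correct and follows essentially the same route as the paper's proof. You bound the number of $e$ whose code contains a nonzero word of small Gray weight with non-constant $\mu$ by $2^{p+1}\cdot 2^{4pH_2(\delta)+o(p)}$ via Lemma~\ref{lem:asymp-vector}, discard constant $\mu$ because such words have Gray weight at least $p>4\delta p$, compare with the $2^{2p-1}(1-o(1))$ LCD codes given by the enumeration formula, and then obtain rate $1/2$ from $|C_e|=4^p$ and the LCD property of the Gray image from Proposition~\ref{prop:gray}.
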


\begin{proof}
	Let $p$ run through the infinitely many odd primes for which $2$ is primitive modulo $p$, and fix
	$$
	0<\delta<H_2^{-1}(1/4).
	$$
	The number of vectors in $\mathbb F_2^{4p}$ of Hamming weight at most
	$4\delta p$ is at most $2^{4pH_2(\delta)+o(p)}.$
	
	Suppose that a nonzero vector
	$w=(\mu,\nu)\in C_e$
	satisfies
	$\operatorname{wt}(\phi(w))\le4\delta p$.
	Since every codeword of $C_e$ is of the form $(a,ae)$, we have $\mu\ne0$.
	Moreover, $\mu$ cannot be a nonzero constant vector, since the Hamming weight of Gray image (simply Gray weight) of every such vector is at least $p$, whereas $4\delta p<p$.
	Thus Lemma~\ref{lem:asymp-vector} applies. Hence the number of double circulant LCD codes containing a nonzero codeword of Gray weight at most
	$4\delta p$ is at most
	$$
	2^{p+1}\cdot2^{4pH_2(\delta)+o(p)}
	=
	2^{p+4pH_2(\delta)+o(p)}.
	$$
	
	On the other hand, Corollary~\ref{coro:2} gives
	$$2\big(2^{2(p-1)}-2^{3(p-1)/2}-2^{p-1}\big)=2^{2p-1}(1-o(1))$$
	double circulant LCD codes. Since
	$H_2(\delta)<1/4$, it follows that
	$$
	\frac{2^{p+4pH_2(\delta)+o(p)}}{2^{2p-1}(1-o(1))}
	\longrightarrow0.
	$$
	Therefore, for all sufficiently large such primes $p$, there exists a double circulant LCD code whose Gray image has no nonzero codeword of Hamming weight at most $4\delta p$. Thus its relative distance is at least $\delta$. Taking
	$$
	\delta=H_2^{-1}(1/4)-\epsilon
	$$
	gives the desired result.
	
	The map
	$A_p\to C_e$ given by $a\mapsto a(1,e)$ is injective, and hence
	$|C_e|=|A_p|=4^p$. Since the Gray map is an $\mathbb F_2$-linear bijection, $\phi(C_e)$ has length $4p$ and dimension $2p$. Hence the Gray images have rate $\rho=1/2$. Therefore, under Artin's primitive root conjecture for $2$, this family is asymptotically good.
\end{proof}
\begin{rem}
	{\rm For comparison, Sendrier \cite{Sendrier} proved that unrestricted
		binary LCD codes meet the asymptotic Gilbert-Varshamov bound, whose relation between rate and relative distance is
		$\rho=1-H_2(\delta). $
		At rate $\rho=1/2$, the corresponding relative distance $\delta$ is $H_2^{-1}(1/2)$. 
		
		On the other hand,
		Theorem~\ref{thm:asymp} shows that 
		$$
		\rho=1/2,\,
		0<\delta<H_2^{-1}(1/4).
		$$
		Although the range of this relative distance does not reach the
		Gilbert--Varshamov bound for unrestricted binary LCD codes, it
		shows that the present double circulant family remains asymptotically
		good. Figure~\ref{fig:asymptotic-comparison} illustrates the
		comparison.
		\begin{figure}[ht]
			\centering
			\begin{tikzpicture}
				\begin{axis}[
					width=0.75\textwidth,
					height=0.55\textwidth,
					xmin=0,
					xmax=0.5,
					ymin=0,
					ymax=1,
					xlabel={Relative distance $\delta$},
					ylabel={Rate $\rho$},
					domain=0.001:0.499,
					samples=250,
					grid=major,
					legend style={
						font=\small,
						at={(0.97,0.97)},
						anchor=north east
					}
					]
					
					\addplot[
					thick
					]
					{1-(-x*ln(x)/ln(2)-(1-x)*ln(1-x)/ln(2))};
					\addlegendentry{Binary GV bound \cite{Sendrier}}
					
					\addplot[
					thick,
					dashed
					]
					coordinates {
						(0.001,0.5)
						(0.0416927,0.5)
					};
					\addlegendentry{Present asymptotic guarantee}
					
					\addplot[
					only marks,
					mark=o,
					mark size=3pt,
					mark options={fill=white}
					]
					coordinates {
						(0.0416927,0.5)
					};
					
					\node[
					anchor=south west,
					font=\small
					] at (axis cs:0.0416927,0.5)
					{$H_2^{-1}(1/4)$};
					
				\end{axis}
			\end{tikzpicture}
			
			\caption{Comparison between the asymptotic relative-distance guarantee
				for the Gray images of the double circulant LCD codes considered in
				this paper and the binary Gilbert--Varshamov bound attained by
				unrestricted binary LCD codes \cite{Sendrier}. The dashed segment
				represents the range $0<\delta<H_2^{-1}(1/4)$ guaranteed by
				Theorem~\ref{thm:asymp} at rate $\rho=1/2$, and the open circle marks the
				limiting value $\delta=H_2^{-1}(1/4)$.}
			\label{fig:asymptotic-comparison}
		\end{figure}
	}
\end{rem}

\section{Gray images of double circulant LCD codes over $\mathbb{F}_2+u\mathbb{F}_2$}\label{Gray}
Before computing some examples, we need the following lemma.
\begin{lem}\label{lem:7}
	Let $n$ be any positive integer. If $C$ has generator matrix $G=(I,A),$ with $A=A_1+uA_2,$ where $A_1,A_2$ are binary circulant matrices of order $n$ corresponding to binary polynomials $a_1(x)$ and $a_2(x)$ respectively, and $I$ denotes the identity matrix of order $n$, then $\phi(C)$ has generator matrix $$G^{\prime}=\left( {\begin{array}{*{20}{c}}
			I&I&{{A_1}}&{{A_1}} \\
			\mathbf{0}&I&{{A_2}}&{{A_1} + {A_2}}
	\end{array}} \right).$$
\end{lem}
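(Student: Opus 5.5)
The plan is to use the fact that $C$, viewed as an $R$-module, is generated by the rows of $G=(I,A)$. Since $R=\mathbb F_2+u\mathbb F_2$, as an $\mathbb F_2$-vector space $C$ is spanned by the rows of $G$ together with the rows of $uG$. Because $\phi$ is an $\mathbb F_2$-linear bijection, $\phi(C)$ is spanned by $\phi$ applied to these $2n$ vectors. Moreover, $|C|=4^n$: the map $a\mapsto aG$ is injective because of the identity block. Hence $\phi(C)$ has $\mathbb F_2$-dimension $2n$. It therefore suffices to show that these $2n$ Gray images are exactly the rows of $G'$, up to reordering coordinates in the convention the lemma uses and up to elementary row operations. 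Linear independence then comes for free, either from the dimension count or from the visible $\begin{pmatrix} I & I\\ \mathbf 0 & I\end{pmatrix}$ block.

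The key step is a coordinatewise computation. Write a vector $\mathbf x=\mathbf a+u\mathbf b\in R^{2n}$ with $\mathbf a,\mathbf b\in\mathbb F_2^{2n}$, and arrange $\phi(\mathbf x)$ blockwise as $(\mathbf b,\mathbf a+\mathbf b)$ applied separately to the left half and the right half of the coordinates. The $i$-th row of $G$ is $(e_i, r_i(A_1)+u\,r_i(A_2))$, where $r_i(\cdot)$ denotes the $i$-th row. Its Gray image is
$$(\mathbf 0,\, e_i,\, r_i(A_2),\, r_i(A_1)+r_i(A_2)).$$
The $i$-th row of $uG$ is $(ue_i, u\,r_i(A_1))$, since $u^2=0$. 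Its Gray image is
$$(e_i,\, e_i,\, r_i(A_1),\, r_i(A_1)).$$
Stacking the second family on top of the first gives exactly
$$\begin{pmatrix} I&I&A_1&A_1\\ \mathbf 0&I&A_2&A_1+A_2\end{pmatrix},$$
which is $G'$.

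The only real obstacle is the coordinate-ordering convention for $\phi$ on $R^{2n}$. The stated $G'$ matches the convention that applies $\phi$ componentwise and then groups all first Gray coordinates of a half before all second ones, taken separately for each of the two halves. The interleaved convention gives a permutation-equivalent matrix instead. I will state this convention explicitly. I will also note that circularity of $A_1$ and $A_2$ is not actually used beyond the decomposition $A=A_1+uA_2$.
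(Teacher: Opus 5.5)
Your proposal is correct and matches the paper's proof: you span $C$ over $\mathbb F_2$ by the rows of $G$ and $uG$, compute their Gray images as $(\mathbf 0, I, A_2, A_1+A_2)$ and $(I, I, A_1, A_1)$, and read off rank $2n$ from the leading $\begin{pmatrix} I & I\\ \mathbf 0 & I\end{pmatrix}$ block. Your remark that $G'$ presupposes applying $\phi$ separately to each length-$n$ half is accurate, and the paper leaves it implicit.
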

\begin{proof}
		Let $g_1,\ldots,g_{2n}$ be the rows of $
		G=(I,A_1+uA_2).$ Since $C$ is generated over $R$ by these rows and the Gray map is
		$\mathbb F_2$-linear, the binary code $\phi(C)$ is generated by
		$$
		\phi(g_1),\ldots,\phi(g_{2n}),
		\phi(ug_1),\ldots,\phi(ug_{2n}).
		$$
		
		Using Gray map we obtain
		$$
		\phi(G)=\big(\begin{array}{cccc}
			\mathbf{0}&I&A_2&A_1+A_2
		\end{array}\big),\,
		\phi(uG)=\big(\begin{array}{cccc}
			I&I&A_1&A_1
		\end{array}\big).
		$$
		Therefore $\phi(C)$ has generator matrix
		$$
		G'=\left( {\begin{array}{*{20}{c}}
			I&I&A_1&A_1\\
			\mathbf{0}&I&A_2&A_1+A_2
			\end{array}} \right).
		$$
		The first two block columns show that $G'$ has rank $2n$.
\end{proof}

In Table \ref{table:1}, we have collected some examples of Gray images of double circulant LCD codes obtained by random search in Magma \cite{magma}, whose parameters attain the best-known parameters. The coefficients of the polynomial
$a_1(x)$ and $a_2(x)$ in $\mathbb{F}_2[x]$, of degree $<n$, are written in decreasing powers of $x$, for example for $n=7$, the
entry $1111000$ means $x^6+x^5+x^4+x^3$. The parameters of binary LCD codes are given in the form $[4n, 2n, d]$ where $d$
is the minimum Hamming distance. The hypothesis that $n$ odd is not
needed in this subsection since we are considering LCD codes of small lengths. The entry in the rightmost column is the known minimum distance of a binary LCD $[4n,2n]$ code, obtained by looking up at the tables in \cite{B,WL,Z}.

\begin{table}[htbp]
	\centering
	\renewcommand{\arraystretch}{1.25}
	\caption{Gray images of double circulant LCD codes over $\mathbb{F}_2+u\mathbb{F}_2$, $^{\dagger}$: Best known parameters}  \label{table:1}
	\setlength{\tabcolsep}{2mm}{
	\begin{tabular}{c|c|c|c|c|c}
		\hline
		$n$ &$a_1(x)$ & $a_2(x)$ & \makecell{Parameters of \\ binary LCD codes} & \makecell{Code \\ name} &\makecell{known minimum \\ distances}\\
		\hline
		7 & $1111000$ & $1110100$ &$[28,14,7]^{\dagger}$ &$C_{28}$ & 7 \cite{B} \\
		10 & $1110010000$ & $0100100111$ & $[40,20,9]^{\dagger}$ & $C_{40}$ & 9 \cite{WL} \\
		11 & $11101000000$ & $10000110000$ & $[44,22,9]^{\dagger}$ & $C_{44}$ & 9 \cite{WL,Z} \\
		\hline
	\end{tabular}}
\end{table}

\begin{ex}{\rm
	The partial weight distribution of binary LCD $[44,22,9]$ codes generated by different matrices is listed in Table~\ref{table:5}, where $C_{44}$ appeared in Table \ref{table:1} is generated by the matrix from Lemma \ref{lem:7}, and $C$ is generated by the matrix $$G=\left( {\begin{array}{*{20}{c}}
	I_{11}&\mathbf{0}&{{A}}&{{B}} \\
	\mathbf{0}&I_{11}&{{B}}&{{A}}
\end{array}} \right)$$ in \cite[Theorem 3.8]{WL}, where $A$ and $B$ are circulant matrices generated by $ 11010011101$ and $10111110001$, respectively. From Table \ref{table:5}, since the two codes have different numbers of codewords of weight $9$, they cannot be permutation-equivalent.
}
\begin{table}[htbp]
	\centering
	\renewcommand{\arraystretch}{1.25}
	\caption{Partial weight distributions of binary LCD codes $C_{44}$ and $C$} \label{table:5}
	\setlength{\tabcolsep}{2mm}{
		\begin{tabular}{c|c|c}
			\hline
			Number $A_i$ & $C_{44}:[44,22,9]$ & $C:[44,22,9]\,\cite{WL}$ \\
			\hline
			$A_0$ & 1 & 1 \\
			$A_9$ & 198 & 132 \\
			$A_{10}$ & 616 & 693  \\
			$A_{11}$ & 1804 & 2068  \\
			$A_{12}$ & 5225 & 4763 \\
			$A_{13}$ & 12364 & 12276 \\
			$A_{14}$ & 27379 & 28457 \\
			$A_{15}$ & 55110 & 54098 \\
			$A_{16}$ & 98505 & 97449 \\
			$A_{17}$ & 162558 & 163988\\
			\hline
	\end{tabular}}
\end{table}
\end{ex}

\section{Further binary LCD constructions}\label{New}
In this section, we construct binary LCD codes from the codes in Table \ref{table:1} by using a known construction. The following part recalls a known construction of binary LCD codes by \cite[Th. 3.5]{LM}.
\begin{prop}[\cite{LM}] \label{prop:1}
	Let $C$ be an $[n,k]$ LCD code with a generator matrix $G$. Let $\mathbf{x}\in C^{\perp}$ and $C(\mathbf{x})$ be the linear code with the generator matrix $$G(\mathbf{x})=\left( {\begin{array}{*{20}{c}}
			1&\mathbf{x} \\
			\mathbf{0}&G
	\end{array}} \right).$$ If Hamming weight $\wt(\mathbf{x})$ is even, then $C(\mathbf{x})$
is a binary LCD $[n+1,k+1]$ code.
\end{prop}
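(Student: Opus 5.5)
We work over $\mathbb F_2$ and use the standard characterization: a linear code with generator matrix $M$ (full row rank) is LCD if and only if $MM^{T}$ is nonsingular. Since $C$ is LCD, $GG^{T}$ is invertible. The plan is to compute $G(\mathbf{x})G(\mathbf{x})^{T}$ in block form and show it is nonsingular. A separate check shows that $G(\mathbf{x})$ has full rank $k+1$.

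The rank is immediate. The first column of $G(\mathbf{x})$ is $(1,0,\dots,0)^{T}$, so the first row is independent of the rows $(\mathbf{0},G)$. Those rows have rank $k$ because $G$ does. Hence $C(\mathbf{x})$ has length $n+1$ and dimension $k+1$.

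For the Gram matrix we compute
$$
G(\mathbf{x})G(\mathbf{x})^{T}=\left(\begin{array}{cc} 1+\mathbf{x}\mathbf{x}^{T} & \mathbf{x}G^{T}\\ G\mathbf{x}^{T} & GG^{T}\end{array}\right).
$$
Since $\mathbf{x}\in C^{\perp}$, every row of $G$ is orthogonal to $\mathbf{x}$, so $G\mathbf{x}^{T}=\mathbf{0}$ and $\mathbf{x}G^{T}=\mathbf{0}$. Over $\mathbb F_2$ we have $\mathbf{x}\mathbf{x}^{T}\equiv \wt(\mathbf{x}) \pmod 2$, which is $0$ because $\wt(\mathbf{x})$ is even. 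The Gram matrix is therefore block diagonal, $\mathrm{diag}(1,GG^{T})$, with determinant $\det(GG^{T})\neq 0$. Hence $C(\mathbf{x})$ is LCD.

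There is no real obstacle here. The only point to cite is the Massey criterion ($C$ is LCD if and only if $GG^{T}$ is invertible). Alternatively, we can argue directly: if $\mathbf{c}=(a,a\mathbf{x}+\mathbf{y}G)$ lies in $C(\mathbf{x})\cap C(\mathbf{x})^{\perp}$, then pairing it with the generators forces $a=0$ and $\mathbf{y}G\in C\cap C^{\perp}=\{0\}$.
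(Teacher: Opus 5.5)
Your proof is correct and complete. The paper itself does not prove this proposition; it quotes it from \cite[Th.~3.5]{LM} and uses it as a black box, so there is no in-paper argument to compare against. Your argument is the standard verification of this result: you show $G(\mathbf{x})G(\mathbf{x})^{T}=\mathrm{diag}(1,GG^{T})$ using $G\mathbf{x}^{T}=\mathbf{0}$ and $\mathbf{x}\mathbf{x}^{T}\equiv\wt(\mathbf{x})\equiv 0 \pmod 2$, and then apply Massey's criterion. It also shows the evenness hypothesis is sharp: for odd $\wt(\mathbf{x})$ the $(1,1)$ entry vanishes, and $(1,\mathbf{x})$ lies in $C(\mathbf{x})\cap C(\mathbf{x})^{\perp}$.
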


Table~\ref{table:4} lists the extension vectors $\mathbf{x}$ found in our computational search. Together with Proposition~\ref{prop:1}, these vectors are used to generate further binary LCD codes. The generator matrices for $C_{28}$, $C_{40}$, and $C_{44}$ are taken from Table~\ref{table:1}.
\begin{table}[htbp]
	\centering
	\renewcommand{\arraystretch}{1.25}
	\caption{Binary LCD codes constructed by Proposition \ref{prop:1}} \label{table:4}
	\setlength{\tabcolsep}{2mm}{
		\begin{tabular}{c|c|c}
			\hline	
			\makecell{Base \\ generator matrix} & $\mathbf{x}$ & \makecell{Code \\ name}  \\
			\hline
			$G(C_{28})$ & $[0000000000000011111111111111]$ & $C_{29}$  \\
			$G(C_{29})$ & $[10000000000000000000001111111]$ & $C_{30}$ \\
			$G(C_{30})$ & $[000100001000000001000100010111]$ & $C_{31}$ \\
			$G(C_{40})$ & $[1100011000000000000000101001011111011110]$ & $C_{41}$  \\
			$G(C_{41})$ & $[01010001010100000100000100010011111101100]$ & $C_{42}$ \\
			$G(C_{42})$ & $[110011100001100000011100000111001000011011]$ & $C_{43}$ \\
			$G(C_{44})$ & $[00000000000000000000001111111111111111111111]$ & $C_{45}$ \\
			$G(C_{45})$ & $[110001100000000000000010000010100111011101011]$ & $C_{46}$ \\
			$G(C_{46})$ & $[0101010000001000000000011011011111111010001110]$ & $C_{47}$ \\
			\hline
	\end{tabular}}
\end{table}

Table~\ref{table:2} compares the parameters of these constructive binary LCD codes from Table~\ref{table:4} with the parameters or lower bounds reported in \cite{B,LM,W}. A star indicates an improvement of original results, while $^{\dagger}$ marks the parameters attaining the cited best-known distance.
\begin{table}[htbp]
	\centering
	\renewcommand{\arraystretch}{1.25}
	\caption{Constructive binary LCD codes; $^*$: improves the minimum distance, $^{\dagger}$: attains the known distance or lower bound} \label{table:2}
	\begin{threeparttable}
		\begin{tabular}{c|c||c|c||c|c}
			\hline
			Parameters & $d$ & Parameters & $d$ & Parameters & $d$ \\
			\hline
			$[29,15,7]^*$ & 6 \cite{B,LM} & $[30,16,6]^{\dagger}$ & 6 \cite{B,LM} & $[31,17,6]^{\dagger}$ & 6 \cite{B,LM} \\
			\hline
			$[41,21,8]$ &9-10 \cite{W}& $[42,22,8]^{\dagger}$ &8-10 \cite{W}& $[43,23,8]^{\dagger}$ & 8-10 \cite{W}  \\
			\hline
			$[45,23,9]^*$ &9*-10 \cite{G,W}& $[46,24,8]^{\dagger}$ & 8-10 \cite{W} & $[47,25,8]^{\dagger}$ & 8-10 \cite{W} \\
			\hline
		\end{tabular}
		\begin{tablenotes}
			\footnotesize
			\item[1] In each row, the parameters are obtained successively from the parameters in the previous column according to the construction method in Proposition~\ref{prop:1} and the vector $\mathbf{x}$ given in Table~\ref{table:4}.
		\end{tablenotes}
	\end{threeparttable}
\end{table}

From Table~\ref{table:2}, we obtain binary LCD codes with parameters
$[29,15,7]$ and $[45,23,9]$. The former exceeds the previously
reported value for binary LCD $[29,15]$ codes, while the latter
provides an explicit construction for a parameter already mentioned
in \cite{G}.
\begin{ex}\label{ex:5}{\rm
	Let $n=7, a_1(x)=x^6+x^5+x^4+x^3,a_2(x)=x^6+x^5+x^4+x^2$, we have circulant matrices
	$$A_1=\left( {\begin{array}{*{20}{c}}
			1&1&1&1&0&0&0 \\
			0&1&1&1&1&0&0 \\
			0&0&1&1&1&1&0 \\
			0&0&0&1&1&1&1 \\
			1&0&0&0&1&1&1 \\
			1&1&0&0&0&1&1 \\
			1&1&1&0&0&0&1
	\end{array}} \right), A_2=\left( {\begin{array}{*{20}{c}}
	1&1&1&0&1&0&0 \\
	0&1&1&1&0&1&0 \\
	0&0&1&1&1&0&1 \\
	1&0&0&1&1&1&0 \\
	0&1&0&0&1&1&1 \\
	1&0&1&0&0&1&1 \\
	1&1&0&1&0&0&1
\end{array}} \right),$$ $$A_1+A_2=\left( {\begin{array}{*{20}{c}}
0&0&0&1&1&0&0 \\
0&0&0&0&1&1&0 \\
0&0&0&0&0&1&1 \\
1&0&0&0&0&0&1 \\
1&1&0&0&0&0&0 \\
0&1&1&0&0&0&0 \\
0&0&1&1&0&0&0
\end{array}} \right).$$ Then the generator matrix of $\phi(C)$ is composed of block matrices, that is, $$G^{\prime}=\left( {\begin{array}{*{20}{c}}
			I_7&I_7&{{A_1}}&{{A_1}} \\
			\mathbf{0}&I_7&{{A_2}}&{{A_1} + {A_2}}
	\end{array}} \right),$$ where $\mathbf{0}$ is a $7\times 7$ zero matrix and $I_7$ is the identity matrix of order $7$. After checking by Magma, the code generated by $G^{\prime}$ is a binary LCD $[28,14,7]$ code.

Let $\mathbf{x}=(0000000000000011111111111111)\in C^{\perp}$ with $\wt(\mathbf{x})=14$, then the generator matrix of $[29,15]$ code is $$\resizebox{\textwidth}{!}{$
	G(\mathbf{x})=
	\left( {\begin{array}{*{29}{c}}
			1&0&0&0&0&0&0&0&0&0&0&0&0&0&0&1&1&1&1&1&1&1&1&1&1&1&1&1&1 \\ 
			0&1&0&0&0&0&0&0&1&0&0&0&0&0&0&1&1&1&1&0&0&0&1&1&1&1&0&0&0 \\ 
			0&0&1&0&0&0&0&0&0&1&0&0&0&0&0&0&1&1&1&1&0&0&0&1&1&1&1&0&0 \\ 
			0&0&0&1&0&0&0&0&0&0&1&0&0&0&0&0&0&1&1&1&1&0&0&0&1&1&1&1&0 \\ 
			0&0&0&0&1&0&0&0&0&0&0&1&0&0&0&0&0&0&1&1&1&1&0&0&0&1&1&1&1 \\ 
			0&0&0&0&0&1&0&0&0&0&0&0&1&0&0&1&0&0&0&1&1&1&1&0&0&0&1&1&1 \\ 
			0&0&0&0&0&0&1&0&0&0&0&0&0&1&0&1&1&0&0&0&1&1&1&1&0&0&0&1&1 \\ 
			0&0&0&0&0&0&0&1&0&0&0&0&0&0&1&1&1&1&0&0&0&1&1&1&1&0&0&0&1 \\ 
			0&0&0&0&0&0&0&0&1&0&0&0&0&0&0&1&1&1&0&1&0&0&0&0&0&1&1&0&0 \\ 
			0&0&0&0&0&0&0&0&0&1&0&0&0&0&0&0&1&1&1&0&1&0&0&0&0&0&1&1&0 \\ 
			0&0&0&0&0&0&0&0&0&0&1&0&0&0&0&0&0&1&1&1&0&1&0&0&0&0&0&1&1 \\ 
			0&0&0&0&0&0&0&0&0&0&0&1&0&0&0&1&0&0&1&1&1&0&1&0&0&0&0&0&1 \\ 
			0&0&0&0&0&0&0&0&0&0&0&0&1&0&0&0&1&0&0&1&1&1&1&1&0&0&0&0&0 \\ 
			0&0&0&0&0&0&0&0&0&0&0&0&0&1&0&1&0&1&0&0&1&1&0&1&1&0&0&0&0 \\ 
			0&0&0&0&0&0&0&0&0&0&0&0&0&0&1&1&1&0&1&0&0&1&0&0&1&1&0&0&0
	\end{array}} \right).
	$}$$ This yields a binary LCD $[29,15,7]$ code whose the partial weight distribution is listed in Table~\ref{table:3}. This parameter is
noteworthy because the previously reported value is
$6$ of binary LCD $[29,15]$ code \cite{B}, thus the present construction
shows that this value must be revised.
}
\end{ex}

\begin{ex}\label{ex:6}{\rm
	Using the same steps, we construct another binary LCD code of length $45$ with the following generator matrix:
	$$\resizebox{\textwidth}{!}{$
	G=\left( {\begin{array}{*{45}{c}}
			1&0&0&0&0&0&0&0&0&0&0&0&0&0&0&0&0&0&0&0&0&0&0&1&1&1&1&1&1&1&1&1&1&1&1&1&1&1&1&1&1&1&1&1&1 \\ 
			0&1&0&0&0&0&0&0&0&0&0&0&1&0&0&0&0&0&0&0&0&0&0&1&1&1&0&1&0&0&0&0&0&0&1&1&1&0&1&0&0&0&0&0&0 \\ 
			0&0&1&0&0&0&0&0&0&0&0&0&0&1&0&0&0&0&0&0&0&0&0&0&1&1&1&0&1&0&0&0&0&0&0&1&1&1&0&1&0&0&0&0&0 \\ 
			0&0&0&1&0&0&0&0&0&0&0&0&0&0&1&0&0&0&0&0&0&0&0&0&0&1&1&1&0&1&0&0&0&0&0&0&1&1&1&0&1&0&0&0&0 \\ 
			0&0&0&0&1&0&0&0&0&0&0&0&0&0&0&1&0&0&0&0&0&0&0&0&0&0&1&1&1&0&1&0&0&0&0&0&0&1&1&1&0&1&0&0&0 \\ 
			0&0&0&0&0&1&0&0&0&0&0&0&0&0&0&0&1&0&0&0&0&0&0&0&0&0&0&1&1&1&0&1&0&0&0&0&0&0&1&1&1&0&1&0&0 \\ 
			0&0&0&0&0&0&1&0&0&0&0&0&0&0&0&0&0&1&0&0&0&0&0&0&0&0&0&0&1&1&1&0&1&0&0&0&0&0&0&1&1&1&0&1&0 \\ 
			0&0&0&0&0&0&0&1&0&0&0&0&0&0&0&0&0&0&1&0&0&0&0&0&0&0&0&0&0&1&1&1&0&1&0&0&0&0&0&0&1&1&1&0&1 \\ 
			0&0&0&0&0&0&0&0&1&0&0&0&0&0&0&0&0&0&0&1&0&0&0&1&0&0&0&0&0&0&1&1&1&0&1&0&0&0&0&0&0&1&1&1&0 \\ 
			0&0&0&0&0&0&0&0&0&1&0&0&0&0&0&0&0&0&0&0&1&0&0&0&1&0&0&0&0&0&0&1&1&1&0&1&0&0&0&0&0&0&1&1&1 \\ 
			0&0&0&0&0&0&0&0&0&0&1&0&0&0&0&0&0&0&0&0&0&1&0&1&0&1&0&0&0&0&0&0&1&1&1&0&1&0&0&0&0&0&0&1&1 \\
			0&0&0&0&0&0&0&0&0&0&0&1&0&0&0&0&0&0&0&0&0&0&1&1&1&0&1&0&0&0&0&0&0&1&1&1&0&1&0&0&0&0&0&0&1 \\
			0&0&0&0&0&0&0&0&0&0&0&0&1&0&0&0&0&0&0&0&0&0&0&1&0&0&0&0&1&1&0&0&0&0&0&1&1&0&1&1&1&0&0&0&0 \\ 
			0&0&0&0&0&0&0&0&0&0&0&0&0&1&0&0&0&0&0&0&0&0&0&0&1&0&0&0&0&1&1&0&0&0&0&0&1&1&0&1&1&1&0&0&0 \\
			0&0&0&0&0&0&0&0&0&0&0&0&0&0&1&0&0&0&0&0&0&0&0&0&0&1&0&0&0&0&1&1&0&0&0&0&0&1&1&0&1&1&1&0&0 \\ 
			0&0&0&0&0&0&0&0&0&0&0&0&0&0&0&1&0&0&0&0&0&0&0&0&0&0&1&0&0&0&0&1&1&0&0&0&0&0&1&1&0&1&1&1&0 \\ 
			0&0&0&0&0&0&0&0&0&0&0&0&0&0&0&0&1&0&0&0&0&0&0&0&0&0&0&1&0&0&0&0&1&1&0&0&0&0&0&1&1&0&1&1&1 \\ 
			0&0&0&0&0&0&0&0&0&0&0&0&0&0&0&0&0&1&0&0&0&0&0&1&0&0&0&0&1&0&0&0&0&1&1&0&0&0&0&0&1&1&0&1&1 \\ 
			0&0&0&0&0&0&0&0&0&0&0&0&0&0&0&0&0&0&1&0&0&0&0&1&1&0&0&0&0&1&0&0&0&0&1&1&0&0&0&0&0&1&1&0&1 \\ 
			0&0&0&0&0&0&0&0&0&0&0&0&0&0&0&0&0&0&0&1&0&0&0&0&1&1&0&0&0&0&1&0&0&0&1&1&1&0&0&0&0&0&1&1&0 \\ 
			0&0&0&0&0&0&0&0&0&0&0&0&0&0&0&0&0&0&0&0&1&0&0&0&0&1&1&0&0&0&0&1&0&0&0&1&1&1&0&0&0&0&0&1&1 \\ 
			0&0&0&0&0&0&0&0&0&0&0&0&0&0&0&0&0&0&0&0&0&1&0&0&0&0&1&1&0&0&0&0&1&0&1&0&1&1&1&0&0&0&0&0&1 \\ 
			0&0&0&0&0&0&0&0&0&0&0&0&0&0&0&0&0&0&0&0&0&0&1&0&0&0&0&1&1&0&0&0&0&1&1&1&0&1&1&1&0&0&0&0&0 
		\end{array}} \right)?$}$$ By Magma computation, the code generated by $G$ is a binary LCD $[45,23,9]$ code, whose partial weight distribution has been presented in Table \ref{table:3}.}
\end{ex}
\begin{rem}
{\rm The preprint \cite{G} mentions a binary LCD $[45,23,9]$ code, but the specific generator polynomial for that code is not listed in its Appendix B. Moreover, the generator polynomial for the $[46,23,10]$ code listed in Appendix B does not reproduce minimum distance $10$ under the polynomial-to-vector convention stated in the original text. Indeed, for
$$
f_{\rm old}(x)=x^{18}+x^{16}+x^{13}+x^{12}+x^{11}+x^{10}+x^9+x^6+x^5+x,
$$
let $a(x)=1+x^4+x^5+x^{11},$ one obtains
$$
a(x)f_{\rm old}(x)\equiv x^{17}+x^{10}+x^6+x^4\pmod{x^{23}-1}.
$$
Thus the code contains the codeword $(a(x),a(x)f_{\rm old}(x))$ of Hamming weight $4+4=8$. Consequently, puncturing one coordinate from this code cannot increase the minimum distance; hence the punctured code still has minimum distance at most $8$, and in particular cannot have minimum distance $9$.

An independent search produced a generator polynomial for the quasi-cyclic code with
$$
g(x)=1,\,f(x)=x^{16}+x^{14}+x^{13}+x^{11}+x^9+x^7+x^6+x^5+x^3+x^2+x+1.
$$
The resulting QC code is LCD with parameters $[46,23,10]$ by \cite[Theorem~4.4]{G}.

Order the coordinates as $(a_0,\ldots,a_{22}\mid b_0,\ldots,b_{22})$. Due to the cyclic symmetry of the quasi-cyclic (QC) code, puncturing any coordinate in the second block yields permutation-equivalent codes. As verified by Magma, the punctured code is a binary LCD $[45,23,9]$ code. We denote it by $C_{45}^{QC}$. The code $C_{45}$ constructed in Example \ref{ex:6} has $242$ codewords of weight $9$, whereas $C_{45}^{QC}$ has $124$. Consequently, these two binary LCD codes are not permutation-equivalent.}
\end{rem}

Based on the binary LCD codes in Table~\ref{table:1} and Table~\ref{table:2} that attain the known minimum distance but are not permutation-equivalent, Table~\ref{table:3} presents the weight distributions of these codes, where $A_i$ denotes the number of codewords of weight $i$.

\begin{table}[htbp]
	\centering
	\renewcommand{\arraystretch}{1.25}
	\caption{Partial weight distributions of the constructed binary LCD codes} \label{table:3}
	\begin{tabular}{c|c|c c c c c c c}
		\hline
		Codes &	Parameters &$A_6$ & $A_7$ & $A_8$ & $A_9$ & $A_{10}$ & $A_{11}$ & $A_{12}$\\
		\hline
		$C_{28}$ & $[28,14,7]^{\dagger}$ & & $100$ & $231$ & $378$ & $847$ & $1344$ & $1701$\\
		$C_{29}$ & $[29,15,7]^*$ & & $121$ &$329$&$609$&$1239$& $2037$ & $3003$ \\
		$C_{30}$ & $[30,16,6]^{\dagger}$ & $7$ &$163$&$428$&$827$&$1848$&$3332$ & $5096$\\
		$C_{31}$ & $[31,17,6]^{\dagger}$ & $17$ &$210$&$553$&$1156$&$2704$&$5206$ & $8476$\\
		$C_{40}$ & $[40,20,9]^{\dagger}$ &&&& $320$ & $992$ & $2180$ &$5505$\\
		$C_{42}$ & $[42,22,8]^{\dagger}$ &&& $52$ & $500$ & $1574$ & $4101$ & $10681$\\
		$C_{43}$ & $[43,23,8]^{\dagger}$ &&& $86$ & $626$ & $2005$ & $5506$ & $14661$ \\
		$C_{44}$ & $[44,22,9]^{\dagger}$ &&&& $198$ & $616$ & $1804$ & $5225$\\
		$C_{45}$ & $[45,23,9]^*$ &&&& $242$ & $770$ & $2409$ & $7251$\\
		$C_{45}^{QC}$ & $[45,23,9]^*$ &&&& $124$ & $877$ & $2754$ & $7195$\\
		$C_{46}$ & $[46,24,8]^{\dagger}$ &&& $13$ & $298$ & $978$ & $3191$ & $9644$\\
		$C_{47}$ & $[47,25,8]^{\dagger}$ &&& $27$ & $365$ & $1250$ & $4177$ & $12760$\\
		\hline
	\end{tabular}
\end{table}
\section{Conclusion}\label{Conlu}

In this paper, we studied double circulant (Euclidean) LCD codes over
$R=\mathbb F_2+u\mathbb F_2$. By means of the CRT decomposition of
$R[x]/(x^n-1)$, we obtained an explicit enumeration formula in which
the characteristic-two case is treated separately from
the previously known general counting formulas. Assuming Artin's
primitive root conjecture for $2$, we further showed that the binary
Gray images contain an asymptotically good family of rate $1/2$ whose
relative distance can approach $H_2^{-1}(1/4)$ from below.

The ring construction also leads to explicit binary LCD codes.
Combining Gray images with a known extension construction, we obtain
a binary LCD code with parameters $[29,15,7]$. We also construct an
explicit binary LCD $[45,23,9]$ code and another
$[45,23,9]$ quasi-cyclic LCD code. Their different weight distributions
show that they are not permutation-equivalent. These constructions
illustrate that double circulant LCD codes over
$\mathbb F_2+u\mathbb F_2$ provide not only a suitable family for
enumeration and asymptotic analysis, but also a useful source of
binary LCD codes with improved or refined parameters
\section*{Declarations}

\noindent\textbf{Data availability} All data produced is available from the authors upon request.  \\

\noindent\textbf{Conflict of Interest} The authors declare no conflict of interest.

\end{document}